\newtheorem{theorem}{Theorem}
\newtheorem{lemma}{Lemma}
\def\hl{{\sc HL}}
\def\hhl{{\sc HHL}}
\def\dist{{\rm dist}}
\def\dist{{\mbox{\rm dist}}}
\newcommand{\OPT}{\mathrm{OPT}}
\newcommand{\LOPT}{\mathrm{LOPT}}
\newcommand{\ROPT}{\mathrm{ROPT}}
\newcommand{\Xcomment}[1]{}
\newcommand{\set}[1]{\left\{#1\right\}}
\newcommand{\zo}{\set{0, 1}}
\title{Separating Hierarchical and General Hub Labelings}
\author{
Andrew V.\ Goldberg \\ Microsoft Research Silicon Valley
\and
Ilya Razenshteyn\thanks{Part of this work done while at Microsoft} \\ CSAIL, MIT
\and
Ruslan Savchenko\thanks{Part of this work done while at Microsoft} \\ Department of Mech.\ and Math., MSU
}
\date{\vspace{-5ex}}
\begin{document}

\maketitle

\begin{abstract}
In the context of distance oracles,
a labeling algorithm computes vertex labels during preprocessing.
An $s,t$ query computes the corresponding distance using the labels
of $s$ and $t$ only, without looking at the input graph.
Hub labels is a class of labels that has been extensively studied.
Performance of the hub label query depends on the label size.
Hierarchical labels are a natural special kind of hub labels.
These labels are related to other problems and can be computed
more efficiently.
This brings up a natural question of the quality of hierarchical labels.
We show that there is a gap: optimal hierarchical labels can
be polynomially bigger than the general hub labels.
To prove this result, we give tight upper and lower bounds on the size of 
hierarchical and general labels for hypercubes.
\end{abstract}

\section{Introduction}

The point-to-point shortest path problem is a fundamental optimization 
problem with many applications. Dijkstra's algorithm~\cite{d-ntpcg-59} 
solves this problem in near-linear time~\cite{g-apspa-08} on directed
and in linear time on undirected graphs~\cite{tho-99}, but some applications
require sublinear distance queries.
This is possible for some graph classes if preprocessing is allowed 
(e.g.,~\cite{dssw-erpa-09,gppr-dlg-04}). 
Peleg introduced a \emph{distance labeling} 
algorithm~\cite{pel-00} that precomputes a \emph{label} for each vertex 
such that the distance between any two vertices $s$ and $t$ can be computed 
using only their labels. 
A special case is \emph{hub labeling} (\hl)~\cite{gppr-dlg-04}: the label of $u$
consists of a collection of vertices (the {\em hubs} of $u$) with their 
distances from $u$. 
Hub labels satisfy the \emph{cover property}: 
for any two vertices $s$ and $t$, there exists a vertex $w$ on the 
shortest $s$--$t$ path that belongs to both the label of $s$ and the label of $t$.

Cohen et al.~\cite{CHKZ-03} give a polynomial-time $O(\log n)$-approximation 
algorithm for the smallest labeling (here $n$ denotes the number of vertices).
(See~\cite{BGGN-13-2} for a generalization.)
The complexity of the algorithm, however, is fairly high, making it impractical
for large graphs.
Abraham et al.~\cite{ADGW-12} introduce a class of {\em hierarchical labelings}
(\hhl) and show that \hhl\ can be computed in $O^*(nm)$ time, 
where $m$ is the number of arcs.
This makes preprocessing feasible for moderately large graphs,
and for some problem classes produces labels that are sufficiently small for
practical use.
In particular, this leads to the fastest distance oracles for
continental-size road networks~\cite{adgw-ahbla-11}.
However, the algorithm of~\cite{ADGW-12} does not have theoretical guarantees
on the size of the labels.

\hhl\ is a natural algorithm that is closely related to other widely studied
problems, such as vertex orderings for contraction 
hierarchies~\cite{gssd-chfsh-08} and elimination
sequences for chordal graphs (e.g.,~\cite{gol-80}).
This provides additional motivation for studying \hhl.
This motivation is orthogonal the relationship of \hhl\ to \hl,
which is not directly related to the above-mentioned problems.

\hhl\ is a special case of \hl, so
a natural question is how the label size is affected by
restricting the labels to be hierarchical.
In this paper we show that \hhl\ labels can be substantially bigger
than the general labels.
Note that it is enough to show this result for a special class of graphs.
We study hypercubes, which have a very simple structure.
However, proving tight bounds for them is non-trivial:
Some of our upper bound constructions and lower bound proofs are
fairly involved.

We obtain upper and lower bounds on the optimal size for both kinds of 
labels in hypercubes.
In particular, for a hypercube of dimension $d$ (with $2^d$ vertices),
we give both upper and lower bounds of $3^d$ on the \hhl\ size.
For \hl, we also give a simple construction producing labels of size $2.83^d$,
establishing a polynomial separation between the two label classes.
A more sophisticated argument based on the primal-dual method yields
$(2.5 + o(1))^d$ upper and lower bounds on the \hl\ size.
Although the upper bound proof is non-constructive, it implies that
the Cohen et al.\ approximation algorithm computes the labels of 
size $(2.5+o(1))^d$, making the bound constructive.

The paper is organized as follows.
After introducing basic definitions in Section~\ref{sec:prelim},
we prove matching upper and lower bounds on the \hhl\ size in 
Section~\ref{sec:hhl}.
Section~\ref{sec:simp} gives a simple upper bound on the size of \hl\
that is polynomially better than the lower bound on the size of \hhl.
Section~\ref{sec:pd} strengthens these results by proving a better lower
bound and a near-matching upper bound on the \hl\ size.
Section~\ref{sec:conc} contains the conclusions.

\section{Preliminaries}
\label{sec:prelim}

In this paper we consider shortest paths in an undirected graph
$G = (V,E)$, with $|V|=n$, $|E|=m$, and length $\ell(a) > 0$ for each arc $a$. 
The length of a path $P$ in $G$ is the sum of its arc lengths. 
The {\em distance query} is as follows:
given a source $s$ and a target $t$, to find the distance $\dist(s,t)$ between 
them, i.e., the length of the shortest path $P_{st}$ between $s$ and $t$ in $G$.
Often we will consider unweighted graphs ($\ell \equiv 1$).

Dijkstra's algorithm~\cite{d-ntpcg-59} solves the problem in
$O(m + n\log n)$~\cite{FT-87} time in the comparison model and in linear
time in weaker models~\cite{tho-99}.
However, for some applications, even linear time is too slow. 
For faster queries, \emph{labeling} algorithms preprocess the graph and 
store a \emph{label} with each vertex; the $s$--$t$ distance can be computed 
from the labels of $s$ and $t$. 
We study \emph{hub labelings} (\hl), a special case of the labeling method. 
For each vertex $v \in V$, \hl\ precomputes a label $L(v)$,
which contains a subset of vertices ({\em hubs}) and, for every hub $u$ 
the distance $\dist(v,u)$.
Furthermore, the labels obey the \emph{cover property}: for any two vertices 
$s$ and $t$, $L(s) \cap L(t)$ must contain at least one vertex on the 
shortest $s$--$t$ path. 

For an $s$--$t$ query, among all vertices $u \in L(s) \cap L(t)$ 
we pick the one minimizing $\dist(s,u) + \dist(u,t)$ and return 
the corresponding sum. 
If the entries in each label are sorted by hub vertex ID, 
this can be done with a sweep over the two labels, as in mergesort.
The {\em label size of $v$}, $|L(v)|$, is the number of hubs in $L(v)$. 
The time for an $s$--$t$ query is $O(|L(s)| + |L(t)|)$. 

The \emph{labeling} $L$ is the set of all labels. 
We define its \emph{size} as  $\sum_v (|L(v)|)$.
Cohen et al.~\cite{CHKZ-03} show how to generate in $O(n^4)$ time a labeling
whose size is within a factor $O(\log n)$ of the optimum.  

Given two distinct vertices $v,w$, we say that $v \preceq w$ if $L(v)$ 
contains $w$. 
A labeling is {\em hierarchical} if $\preceq$ is a partial order. 
We say that this order is {\em implied} by the labeling. 
Labelings computed by the algorithm of Cohen et al.\ are not 
necessarily hierarchical.
Given a total order on vertices, the {\em rank function} 
$r: V \rightarrow [1\ldots n]$ ranks the vertices according to the order. 
We will call the corresponding order $r$.

We define a $d$-dimensional hypercube $H=(V,E)$ graph as follows.
Let $n = 2^d$ denote the number of vertices.
Every vertex $v$ has an $d$-bit binary ID that we will also denote by $v$.
The bits are numbered from the most to the least significant one.
Two vertices $v,w$ are connected iff their IDs differ in exactly one bit.
If $i$ is the index of that bit, we say that $(v,w)$ {\em flips} $i$.
We identify vertices with their IDs, and $v \oplus w$ denotes exclusive or.
We also sometimes view vertices as subsets of $\{1 \ldots d\}$, with
bits indicating if the corresponding element is in or out of the set.
Then $v \oplus w$ is the symmetric difference.
The graph is undirected and unweighted.

\section{Tight Bounds for HHL on Hypercubes}
\label{sec:hhl}

In this section we show that a $d$-dimensional hypercube has a labeling
of size $3^d$, and this labeling is optimal.

Consider the following labeling: treat vertex IDs as sets.
$L(v)$ contains all vertices whose IDs are subsets of that of $v$.
It is easy to see that this is a valid hierarchical labeling.
The size of the labeling is
$$
\sum_{i=0}^d 2^i \binom{d}{i} = 3^d .
$$
\begin{lemma}
A $d$-dimensional hypercube has an \hhl\ of size $3^d$.
\end{lemma}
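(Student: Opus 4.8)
The plan is to verify that the labeling described just before the statement—where $L(v)$ consists of all vertices whose IDs, viewed as subsets of $\{1,\dots,d\}$, are subsets of the ID of $v$—is both a valid hub labeling and hierarchical, and then to confirm the size computation. The construction and size formula are already laid out in the text, so the proof is essentially a matter of checking three things carefully: the cover property, the partial-order (acyclicity) condition, and the arithmetic $\sum_{i=0}^{d} 2^i \binom{d}{i} = 3^d$.

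First I would check the cover property. Given two vertices $s$ and $t$ (thought of as subsets), let $w = s \cap t$. Then $w \subseteq s$ and $w \subseteq t$, so $w \in L(s) \cap L(t)$. It remains to see that $w$ lies on some shortest $s$--$t$ path. In the hypercube, a shortest path from $s$ to $t$ flips exactly the bits in $s \oplus t$ (the symmetric difference), in any order; its length is $|s \oplus t|$. Going from $s$ to $w = s \cap t$ flips exactly the bits in $s \setminus t$, and going from $w$ to $t$ flips exactly the bits in $t \setminus s$; together these are exactly the bits of $s \oplus t$, each flipped once, so $\dist(s,w) + \dist(w,t) = |s \setminus t| + |t \setminus s| = |s \oplus t| = \dist(s,t)$. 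Hence $w$ is on a shortest $s$--$t$ path and the cover property holds.

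Next I would check that the labeling is hierarchical, i.e., that the relation $v \preceq w$ (meaning $w \in L(v)$, equivalently $w \subseteq v$) is a partial order. Reflexivity and the fact that we only consider distinct vertices are handled by the definition in the preliminaries; the key point is antisymmetry/acyclicity: if $w \subseteq v$ and $v \subseteq w$ then $v = w$, and more generally a chain $v_1 \supsetneq v_2 \supsetneq \cdots$ cannot cycle, since set inclusion is a partial order on subsets of $\{1,\dots,d\}$. So $\preceq$ is a restriction of the subset order and is therefore a partial order, making the labeling an \hhl. Finally, the size: a vertex $v$ with $|v| = i$ (i.e., whose ID has $i$ ones) has exactly $2^i$ subsets, hence $|L(v)| = 2^i$; there are $\binom{d}{i}$ such vertices, so the total size is $\sum_{i=0}^{d} 2^i \binom{d}{i} = (1+2)^d = 3^d$ by the binomial theorem.

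There is no real obstacle here—this is the easy direction of the \hhl\ bound (the matching lower bound, presumably proved next, is where the work lies). The only mild subtlety is making the cover-property argument precise about which vertices lie on shortest paths in the hypercube, but the symmetric-difference characterization of hypercube distances makes this immediate. I would present the three checks (cover property, hierarchy, size) as short paragraphs or a brief enumerated list and conclude.
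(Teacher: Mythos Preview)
Your proposal is correct and follows exactly the paper's approach: the paper presents the subset labeling, asserts ``it is easy to see that this is a valid hierarchical labeling,'' and states the size identity $\sum_{i=0}^d 2^i\binom{d}{i}=3^d$. You simply fill in the three straightforward checks (cover property via $w=s\cap t$, hierarchy via the subset partial order, and the binomial-theorem computation) that the paper leaves implicit.
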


Next we show that $3^d$ is a tight bound.
Given two vertices $v$ and $w$ of the hypercube, the {\em induced hypercube}
$H_{vw}$ is the subgraph induced by the vertices that have the same bits in the
positions where the bits of $v$ and $w$ are the same, and arbitrary bits
in other positions.
$H_{vw}$ contains all shortest paths between $v$ and $w$.
For a fixed order of vertices $v_1, v_2,\ldots,v_n$ (from least to most
important), we define a {\em canonical labeling} as follows:
$w$ is in the label of $v$ iff $w$ is the maximum vertex of $H_{vw}$
with respect to the vertex order.
The labeling is valid because for any $s,t$, the maximum vertex of $H_{st}$
is in $L(s)$ and $L(t)$, and is on the $s$--$t$ shortest path.
The labeling is \hhl\ because all hubs of a vertex $v$ have ranks greater or 
equal to the rank of $v$.
The labeling is minimal because if $w$ is the maximum vertex in $H_{vw}$,
then $w$ is the only vertex of $H_{vw} \cap L(w)$, so $L(v)$ must contain $w$.

\begin{lemma}
The size of a canonical labeling is independent of the vertex ordering.
\end{lemma}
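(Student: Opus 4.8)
The plan is to show that the size of the canonical labeling can be written as a sum over *all ordered pairs* $(v,w)$ of an indicator that does not depend on the chosen order. For a fixed pair $(v,w)$, let $k = |v \oplus w|$ be the number of positions in which $v$ and $w$ differ; then the induced hypercube $H_{vw}$ is a $k$-dimensional subcube containing $2^k$ vertices. The vertex $w$ lies in $L(v)$ precisely when $w$ is the maximum (highest-ranked) vertex of $H_{vw}$. So I would like to argue that, summed appropriately, the total contribution depends only on the multiset of subcube sizes, i.e.\ only on the combinatorics of the hypercube, not on $r$.

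**First I would** reorganize the sum. Observe that $\dist(H) := \sum_v |L(v)| = \sum_{(v,w)} [\,w = \max H_{vw}\,]$, where the sum is over ordered pairs of distinct vertices (the diagonal contributes each vertex to its own label, a fixed additive $n$, so I can include or exclude it as convenient). The key observation is a bijective/symmetry trick: for a fixed vertex $w$ and a fixed subcube $C$ of dimension $k$ with $w \in C$ and $w = \max_r C$, the number of $v$ with $H_{vw} = C$ is exactly one — namely $v$ is the antipode of $w$ inside $C$. So $\dist(H) = \sum_{C} [\,\text{the max of } C \text{ has its antipode in } C\,]$, which is trivially true — every subcube $C$ of the hypercube contributes exactly $1$, once we note that for \emph{every} axis-aligned subcube $C$ and \emph{every} vertex $w\in C$, the pair $(\text{antipode of }w\text{ in }C,\ w)$ has induced cube exactly $C$. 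Hence $w=\max_r C$ picks out exactly one vertex $w\in C$, and that vertex does have an antipode in $C$, so $C$ contributes exactly $1$ to the sum regardless of $r$.

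**Therefore** $\dist(H)$ equals the number of axis-aligned subcubes of $H$ (of all dimensions $0 \le k \le d$), which is $\sum_{k=0}^d \binom{d}{k} 2^{d-k} = 3^d$ — a quantity manifestly independent of the ordering. I would present this cleanly by introducing, for each ordered pair $(v,w)$, the axis-set $S = \{i : v_i \ne w_i\}$ and the ``anchor'' $a = v \wedge w$ (agreeing bits of $v$ and $w$, arbitrary on $S$); the map $(v,w) \mapsto (S, a|_{\bar S},\ \text{the vertex }w)$ identifies ordered pairs with (subcube, chosen vertex of that subcube) pairs, and the canonical-label indicator becomes ``$w$ is $r$-maximal in its subcube,'' which selects exactly one vertex per subcube.

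**The main obstacle** is making the bijection between ordered pairs $(v,w)$ and pairs (subcube $C$, vertex $w \in C$) precise and checking it is genuinely a bijection: given $C$ and $w\in C$, the unique $v$ producing $H_{vw}=C$ is the antipode of $w$ within $C$, and conversely $H_{vw}$ is always the smallest subcube containing both $v$ and $w$, inside which they are antipodal. Once that correspondence is nailed down, the independence from $r$ is immediate because ``being the $r$-maximum of $C$'' selects exactly one $w\in C$ no matter what $r$ is, so each subcube contributes exactly once. A minor bookkeeping point is whether to count the trivial $0$-dimensional subcubes (single vertices), corresponding to $v=w$; including them gives the clean total $3^d$ and matches the first lemma's labeling, in which each vertex is in its own label.
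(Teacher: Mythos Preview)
Your argument is correct and takes a genuinely different route from the paper. The paper proves invariance under adjacent transpositions: it fixes an ordering, swaps $v_i$ and $v_{i+1}$, and exhibits a bijection (via the involution $w \mapsto w \oplus (v_i \oplus v_{i+1})$) between labels that lose a hub and labels that gain one. Your approach instead computes the size outright: the bijection
\[
(v,w)\ \longleftrightarrow\ (H_{vw},\,w)
\]
between ordered pairs and pairs (axis-aligned subcube, vertex of that subcube) is exactly right, since $v$ and $w$ are antipodal in $H_{vw}$ and conversely every $(C,w)$ recovers a unique $v$ as the antipode of $w$ in $C$. Under this bijection the indicator $[w=\max_r H_{vw}]$ becomes $[w=\max_r C]$, which fires once per subcube irrespective of $r$, so the size is the number of subcubes, $\sum_k \binom{d}{k}2^{d-k}=3^d$. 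This is cleaner than the paper's proof: it avoids the transposition case analysis and simultaneously yields the exact value $3^d$, which the paper obtains only by separately analyzing one particular ordering. The paper's approach, on the other hand, is more local and would generalize to settings where one can track how a single swap affects labels even without a global closed form.
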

\begin{proof}
It is sufficient to show that any transposition of neighbors does not
affect the size.
Suppose we transpose $v_i$ and $v_{i+1}$.
Consider a vertex $w$.
Since only the order of $v_i$ and $v_{i+1}$ changed,
$L(w)$ can change only if either $v_i \in H_{v_{i+1}w}$ or $v_{i+1} \in H_{v_{i}w}$,
and $v_{i+1}$ is the most important vertex in the corresponding induced
hypercubes.
In the former case $v_{i+1}$ is removed from $L(w)$ after the transposition,
and in the latter case $v_i$ is added.
There are no other changes to the labels.

Consider a bijection $b: H \Rightarrow H$,
obtained by flipping all bits of $w$ in the positions
in which $v_i$ and $v_{i+1}$ differ.
We show that $v_{i+1}$ is removed from $L(w)$ iff $v_i$ is added to $L(b(w))$.
This fact implies the lemma.

Suppose $v_{i+1}$ is removed from $L(w)$, i.e., $v_i \in H_{v_{i+1}w}$ and
before the transposition $v_{i+1}$ is the maximum vertex in $H_{v_{i+1}w}$.
From $v_i \in H_{v_{i+1}w}$ it follows that $v_i$ coincides with $v_{i+1}$ in
the positions in which $v_{i+1}$ and $w$ coincide.
Thus $b$ doesn't flip bits in the positions in which $v_{i+1}$ and $w$ coincide.
So positions in which $v_{i+1}$ and $w$ coincide are exactly the same in which
$b(v_{i+1})$ and $b(w)$ coincide.
Moreover, in these positions all four $v_{i+1}$, $w$, $b(v_{i+1})$ and $b(w)$ coincide.
So each vertex from $H_{v_{i+1},w}$ contains in $H_{b(v_{i+1}),b(w)}$ and vice versa, thus implying $H_{v_{i+1}w} = H_{b(v_{i+1})b(w)}$.
Note that $b(v_{i+1}) = v_i$, and therefore $H_{v_{i+1}w} = H_{v_i b(w)}$.
Before the transposition, $v_{i+1}$ is the maximum vertex of $H_{v_i b(w)}$
and therefore $L(b(w))$ does not contain $v_i$.
After the transposition, $v_i$ becomes the maximum vertex, so
$L(b(w))$ contains $v_i$.

This proves the if part of the claim. The proof of the only if part is similar.
\end{proof}

The hierarchical labeling of size $3^d$ defined above
is canonical if the vertices are ordered in the reverse order of their IDs.
Therefore we have the following theorem.
\begin{theorem}
Any hierarchical labeling of a hypercube has size of at least $3^d$.
\end{theorem}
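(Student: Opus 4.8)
The plan is to reduce an arbitrary hierarchical labeling to a canonical one without increasing its size, and then invoke the two lemmas already established. First I would take an arbitrary \hhl\ $L$ with its implied partial order $\preceq$ (recall $v \preceq w$ iff $w \in L(v)$), and fix any linear extension of $\preceq$; call the resulting total order $r$, with larger rank meaning ``more important.'' By construction every hub of a vertex $v$ has rank at least $r(v)$, so $L$ is consistent with $r$ in the sense required by the canonical-labeling discussion.

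Next I would show that the canonical labeling $L^r$ associated with $r$ is contained in $L$ vertex by vertex, i.e.\ $L^r(v) \subseteq L(v)$ for every $v$. This is the minimality argument sketched above the statement: if $w$ is the most important vertex of the induced hypercube $H_{vw}$, then the cover property applied to the pair $(v,w)$ produces some $x \in L(v)\cap L(w)$ lying on a shortest $v$--$w$ path, hence $x \in H_{vw}$; since $x \in L(w)$ and $L$ is hierarchical, $r(x) \ge r(w)$, while $x \in H_{vw}$ together with the maximality of $w$ forces $r(x) \le r(w)$; thus $x = w$ and $w \in L(v)$. Since this covers exactly the hubs prescribed by the canonical labeling, summing over $v$ gives $|L| \ge |L^r|$.

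Finally I would compute $|L^r|$. By the lemma stating that the size of a canonical labeling does not depend on the vertex ordering, $|L^r|$ equals the size of the canonical labeling for the order that ranks vertices in reverse order of their IDs (so that the all-zeros vertex is the most important); as observed just after that lemma, this canonical labeling is exactly the ``subsets of the ID'' labeling, whose size was computed to be $\sum_{i=0}^d 2^i \binom{d}{i} = 3^d$. Combining the inequalities yields $|L| \ge |L^r| = 3^d$, which is the claim.

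I expect the only real subtlety to be the bookkeeping in the second step: making sure the cover property is applied in the correct direction and that the consistency of $L$ with $r$ (every hub of $v$ has rank at least $r(v)$) is invoked precisely where the bound $r(x)\ge r(w)$ is needed. Everything else is a direct appeal to the already-proved lemmas and the explicit $3^d$ computation.
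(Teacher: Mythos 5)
Your proposal is correct and follows exactly the route the paper intends: extend the implied partial order to a total order, use the paper's minimality observation to show the corresponding canonical labeling is contained hub-by-hub in $L$, and then invoke the ordering-independence lemma together with the $3^d$ computation for the reverse-ID order. The only difference is that you spell out the final assembly (which the paper leaves implicit after the lemmas), and your bookkeeping in the containment step ($r(x)\ge r(w)$ from $x\in L(w)$, $r(x)\le r(w)$ from maximality of $w$ in $H_{vw}$) is exactly right.
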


\section{An $O(2.83^d)$ HL for Hypercubes}
\label{sec:simp}

Next we show an \hl\ for the hypercube of size $O(2.83^d)$.
Combined with the results of Section~\ref{sec:hhl}, this implies
that there is a polynomial (in $n=2^d$) gap between hierarchical
and non-hierarchical label sizes.

Consider the following \hl\ $L$:
For every $v$, $L(v)$ contains all vertices with the first $\lfloor d/2\rfloor$ bits
of ID identical to those of $v$ and the rest arbitrary, and all
vertices with the last $\lceil d/2\rceil$ bits of ID identical to those of $v$ and the
rest arbitrary.
It is easy to see that this labeling is non-hierarchical.
For example, consider two distinct vertices $v,w$ with the 
same $\lfloor d/2\rfloor$ first ID bits.
Then $v \in L(w)$ and $w \in L(v)$.

To see that the labeling is valid, fix $s,t$ and consider a vertex $u$
with the first $\lfloor d/2\rfloor$ bits equal to $t$ and the last $\lceil d/2\rceil$ bits equal to $s$.
Clearly $u$ is in $L(s) \cap L(t)$.
The shortest path that first changes bits of the first half of $s$ to 
those of $t$ and then the last bits passes through $u$.

The size of the labeling is 
$2^d \cdot (2^{\lfloor d/2\rfloor} + 2^{\lceil d/2\rceil}) = O(2^{\frac{3}{2}d}) = O(2.83^d)$.
We have the following result.
\begin{theorem}\label{tm:simp}
A $d$-dimensional hypercube has an \hl\ of size $O(2.83^d)$.
\end{theorem}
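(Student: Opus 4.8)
The plan is to exhibit the labeling explicitly and verify the three required properties: non-hierarchy (already sketched), validity via the cover property, and the size bound. Since the construction is given in the statement, the proof amounts to a careful but short verification, so I would organize it as three short paragraphs, one per property.

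First I would address \emph{validity}. Fix arbitrary vertices $s$ and $t$. Define $u$ to be the vertex whose first $\lfloor d/2\rfloor$ bits agree with those of $t$ and whose last $\lceil d/2\rceil$ bits agree with those of $s$. By the definition of $L$, the vertex $u$ lies in $L(t)$ (it matches $t$ on the first half) and in $L(s)$ (it matches $s$ on the second half), so $u \in L(s)\cap L(t)$. It remains to check that $u$ lies on some shortest $s$--$t$ path. Consider the path that starts at $s$, flips — one at a time — exactly those positions among the first $\lfloor d/2\rfloor$ where $s$ and $t$ disagree, reaching $u$, and then flips the positions among the last $\lceil d/2\rceil$ where $s$ and $t$ disagree, reaching $t$. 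Every flipped coordinate is a coordinate on which $s$ and $t$ differ, and each such coordinate is flipped exactly once, so the path has length $\dist(s,t)$ and passes through $u$. Hence the cover property holds and $L$ is a valid hub labeling.

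Next, \emph{non-hierarchy}: as noted in the statement, if $v\neq w$ share their first $\lfloor d/2\rfloor$ bits, then $w\in L(v)$ and $v\in L(w)$, so $\preceq$ is not antisymmetric and therefore not a partial order; I would just restate this in one sentence for completeness (it is not needed for the theorem, but clarifies why the construction is outside \hhl).

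Finally, \emph{size}. For each vertex $v$, the set of vertices matching $v$ on the first $\lfloor d/2\rfloor$ bits has size $2^{\lceil d/2\rceil}$, and the set matching on the last $\lceil d/2\rceil$ bits has size $2^{\lfloor d/2\rfloor}$, so $|L(v)|\le 2^{\lceil d/2\rceil}+2^{\lfloor d/2\rfloor}$. Summing over all $2^d$ vertices gives $\sum_v |L(v)| \le 2^d\bigl(2^{\lceil d/2\rceil}+2^{\lfloor d/2\rfloor}\bigr) \le 2^{d}\cdot 2\cdot 2^{\lceil d/2\rceil} = O(2^{3d/2}) = O(2.83^d)$, using $2^{3/2}=2\sqrt2 < 2.83$. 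This completes the proof. There is no real obstacle here; the only point requiring a moment's care is confirming that the concatenated "first-half then second-half" path is genuinely shortest, which follows because the two halves of the coordinate set are disjoint and each differing coordinate is flipped exactly once.
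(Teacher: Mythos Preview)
Your proposal is correct and follows essentially the same argument as the paper: the same hub $u$ (first half of bits from $t$, second half from $s$), the same shortest path through $u$ obtained by flipping the differing first-half bits and then the differing second-half bits, and the same size count $2^d(2^{\lfloor d/2\rfloor}+2^{\lceil d/2\rceil})=O(2^{3d/2})$. Your write-up is slightly more explicit about why the path is shortest and uses $\le$ rather than $=$ for $|L(v)|$ (accounting for the overlap at $v$), but these are cosmetic differences.
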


\section{Better HL Bounds}
\label{sec:pd}

The bound of Theorem~\ref{tm:simp} can be improved.
Let $\OPT$ be the optimal hub labeling size for a $d$-dimensional hypercube.
In this section we prove the following result.

\begin{theorem}
  $\OPT = (2.5 + o(1))^d$
\end{theorem}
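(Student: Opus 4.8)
The plan is to establish matching upper and lower bounds of $(2.5+o(1))^d$ on $\OPT$ using LP duality. The hub labeling problem has a natural set-cover / covering-LP formulation: we must choose, for each ordered pair of a vertex $v$ and a shortest-path-relevant hub $w$, whether $w\in L(v)$, subject to the constraint that every pair $(s,t)$ is covered by some common hub on a shortest $s$--$t$ path. Since the hypercube is vertex-transitive (indeed its automorphism group acts transitively and even symmetrically on pairs at a fixed Hamming distance), the optimal fractional solution and the optimal dual can be taken to be symmetric, so all quantities depend only on Hamming distances rather than on specific vertices. I would first set up this LP, observe that by averaging over the automorphism group the fractional optimum equals (up to the symmetrization) a quantity controlled by a single-vertex ``local'' problem, and note that for set cover the integrality gap is $O(\log n)=O(d)$, which is absorbed into the $o(1)$ in the exponent; this is also exactly why the Cohen et al.\ $O(\log n)$-approximation then yields a constructive labeling of the same size up to the $(1+o(1))$ factor in the base.

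For the \textbf{upper bound} I would exhibit a good fractional (or directly integral) solution. The intuition from Theorem~\ref{tm:simp} is that splitting the $d$ bits into blocks and taking ``agreement on a block'' hubs is wasteful because a pair $(s,t)$ differing in $k$ bits only needs its hub to agree with $s$ on some set and with $t$ on the complementary set relative to $s\oplus t$. A cleverer scheme assigns each hub $w$ to $v$ with a probability depending on $|v\oplus w|$ chosen so that the total expected label size is minimized subject to: for every $s,t$, the expected number of common hubs on a shortest path is at least $1$. Writing $s\oplus t$ as a set of size $k$, a hub on a shortest path is obtained by choosing a subset $A\subseteq (s\oplus t)$ and flipping those bits of $s$; such a hub is at distance $|A|$ from $s$ and $k-|A|$ from $t$, and there are $\binom{k}{|A|}$ of them. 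If each vertex independently includes a vertex at distance $j$ with probability $p_j$, the covering constraint for a pair at distance $k$ is $\sum_{a=0}^{k}\binom{k}{a}p_a p_{k-a}\ge 1$, and the objective is $\sum_j \binom{d}{j}p_j$ per vertex, i.e.\ $2^d\sum_j\binom{d}{j}p_j$ total. I would solve this optimization: the binomial-convolution constraint suggests taking $p_j=c\,\rho^j$ for suitable constants, giving constraint $(1+\rho)^k c^2 \ge 1$ (worst at $k=0$, forcing $c\ge1$, but one can do better by letting $c$ and $\rho$ scale with $d$) and objective $2^d(1+\rho)^d c$; optimizing the trade-off between the constant-distance constraints and the $k=d$ constraint yields the base $2.5$. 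The analysis has to be done carefully because the binding constraints are at the extreme distances $k=0$ and $k=d$ simultaneously, so the right ansatz is probably $p_j$ concentrated near $j\approx d/2$ with $p_0$ handled separately (every vertex is trivially its own hub, covering $k=0$ for free), after which the constraint family is $\sum_{a}\binom{k}{a}p_ap_{k-a}\ge 1$ for $1\le k\le d$ and the optimum of $\sum_j\binom{d}{j}p_j$ works out to $(5/4)^d$ per vertex, hence $\OPT\le 2^d(5/4)^d=2.5^d$ up to lower-order factors.

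For the \textbf{lower bound} I would use LP duality: exhibit a feasible solution to the dual (packing) LP of value $(2.5-o(1))^d$. Concretely, assign a weight $y_{st}\ge 0$ to each pair $(s,t)$ so that for every potential hub assignment ``$w\in L(v)$'' the total weight of pairs it helps cover is at most $1$: $\sum_{(s,t):\,w\in P_{st},\ w\text{ usable by both}} y_{st}\le 1$ for all $(v,w)$, or more precisely the constraint dual to the indicator of $w\in L(v)$. Then $\sum_{st}y_{st}\le \OPT$. By symmetry I would take $y_{st}=q_{|s\oplus t|}$ depending only on distance. A hub placed at a vertex $v$ that is distance $a$ from $s$ and distance $b$ from $t$ lies on a shortest $s$--$t$ path precisely when the distances add, $a+b=k$ with $k=|s\oplus t|$; counting over all $t$ at distance $k$ from $s$ with $v$ on the path gives $\binom{d-k}{\,a\,}$-type multiplicities, and the dual constraint becomes a binomial-weighted sum $\sum_{k} q_k \cdot(\text{multiplicity})\le 1$. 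Choosing $q_k$ as the appropriately scaled ``dual'' of the primal ansatz $p_j=c\rho^j$ (by complementary slackness, supported where the primal constraints are tight) produces a packing of value $(5/4)^d$ per vertex, i.e.\ total $(2.5-o(1))^d$. \textbf{The main obstacle} I anticipate is getting the extremal structure of these two paired LPs exactly right: one must verify that a single geometric/binomial ansatz simultaneously makes all the covering constraints tight (or slack in a controlled way) across the whole range $0\le k\le d$, which requires a careful stationarity computation and a concentration argument to show the ``per-vertex'' base is exactly $5/4$ and the multiplicative slack is only $2^{o(d)}$ — in particular handling the $k=0$ and $k=d$ boundary constraints and the $O(d)$ integrality gap without degrading the base of the exponent.
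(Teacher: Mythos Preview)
Your proposal has a fundamental gap: the LP you set up is not actually an LP. With per-hub indicator variables ``$w\in L(v)$'', the covering constraint for a pair $(s,t)$ is that \emph{some} $w$ on a shortest $s$--$t$ path satisfies $w\in L(s)$ \emph{and} $w\in L(t)$; this is a conjunction of two indicators and is not linear in those variables. Your own coverage expression $\sum_a \binom{k}{a}p_ap_{k-a}$ is quadratic, confirming this. Consequently your dual is ill-posed: a single assignment ``$w\in L(v)$'' covers no pair by itself, so the constraint ``total weight of pairs it helps cover is at most $1$'' is vacuous, and the packing bound $\sum_{st}y_{st}\le\OPT$ does not follow from it. The paper avoids this by using the Cohen et al.\ set-cover formulation with variables $x_{v,S}$, where $S$ is the entire set of vertices whose labels contain $v$; then the cover constraint for $\{i,j\}$ is the linear condition $\sum_{v\in H_{ij}}\sum_{S\supseteq\{i,j\}}x_{v,S}\ge 1$, and the dual constraints are indexed by pairs $(v,S)$ with right-hand side $|S|$. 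Everything downstream (symmetrization to a regular dual $\tilde y_k$, then reducing the computation of $\tilde y_k^*$ to the maximum-density subgraph of an auxiliary graph $G_k$ whose edges are pairs $\{i,j\}$ with $0^d\in H_{ij}$ and $\dist(i,j)=k$) hinges on this correct LP.

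Your upper-bound construction is also off quantitatively. With the geometric ansatz $p_j=c\rho^j$, the constraint becomes $c^2(2\rho)^k\ge 1$; if $2\rho\ge 1$ the binding case is $k=0$, forcing $c\ge 1$ and objective $(1+\rho)^d$ per vertex, minimized at $\rho=1/2$ to give $3^d$ total; if $2\rho<1$ the binding case is $k=d$, forcing $c\ge(2\rho)^{-d/2}$ and objective $\bigl((1+\rho)/\sqrt{2\rho}\bigr)^d$, minimized at $\rho=1$ to give $2^{3d/2}\approx 2.83^d$ total. So this family never reaches $2.5^d$, and you give no concrete alternative ansatz that does. The paper gets the upper bound non-constructively, via the $O(d)$ integrality gap of the (correct) set-cover LP together with $\LOPT=(2.5+o(1))^d$; the latter comes not from a primal construction but from computing the regular dual optimum exactly: the densest subgraph of $G_k$ is a regular (bipartite) component with density $\binom{d}{\lfloor k/2\rfloor}\binom{d-\lfloor k/2\rfloor}{\lceil k/2\rceil}/\bigl(\binom{d}{\lfloor k/2\rfloor}+\binom{d}{\lceil k/2\rceil}\bigr)$, and maximizing $N_k\tilde y_k^*$ over $k$ lands at $k\approx 2d/5$, yielding base $2^{\,1+H(0.4)-0.8H(0.25)}=2.5$.
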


The proof uses the primal-dual method.
Following~\cite{CHKZ-03}, we view the labeling problem as a special case
of \textbf{SET-COVER}.
We state the problem of finding an optimal hub labeling of a hypercube 
as an integer linear program (ILP) which is a special case of a standard 
ILP formulation of \textbf{SET-COVER} (see e.g.~\cite{vaz-01}), 
with the sets corresponding
to the shortest paths in the hypercube.
For every vertex $v \in \zo^d$ and every subset $S \subseteq \zo^d$ we introduce a binary variable
$x_{v,S}$. In the optimal solution $x_{v,S} = 1$ iff $S$ is the set of vertices whose labels contain $v$.
For every \emph{unordered} pair of vertices $\set{i, j} \subseteq \zo^d$ we introduce the following constraint:
there must be a vertex $v \in \zo^d$ and a subset $S \subseteq \zo^d$ such that $v \in H_{i j}$
(recall that the subcube $H_{ij}$ consists of vertices that lie on the shortest paths from $i$ to $j$),
$\set{i, j} \subseteq S$, and $x_{v,S} = 1$.
Thus, $\OPT$ is the optimal value of the following integer linear program:
\begin{equation*}
\;\;\;\; \min\sum_{v,S} |S| \cdot x_{v,S} \;\;\;\;\;\textrm{ subject to}
\end{equation*}
\vspace{-20pt}
\begin{equation}
\label{ilp}
\begin{cases}
   x_{v,S} \in \zo & \forall \; v \in \zo^d , S \subseteq \zo^d \\
  \sum_{\begin{smallmatrix}S \supseteq \set{i, j}\\ v \in H_{ij} \end{smallmatrix}}
  x_{v,S} \geq 1 & \forall \set{i, j} \subseteq \zo^d 
\end{cases}
\end{equation}
We consider the following LP-relaxation of~(\ref{ilp}):
\begin{equation*}
\;\;\;\; \min\sum_{v,S} |S| \cdot x_{v,S} \;\;\;\;\;\textrm{ subject to}
\end{equation*}
\vspace{-20pt}
\begin{equation}
\label{lp}
\begin{cases}
  x_{v,S} \geq 0 & \forall \; v \in \zo^d , S \subseteq \zo^d \\
  \sum_{\begin{smallmatrix}S \supseteq \set{i, j}\\ v \in H_{ij} \end{smallmatrix}}
  x_{v,S} \geq 1 & \forall \set{i, j} \subseteq \zo^d
\end{cases}
\end{equation}
We denote the optimal value of~(\ref{lp}) by $\LOPT$,
and bound $\OPT$ as follows:
\begin{lemma}
  \label{rounding}
  $\LOPT \leq \OPT \leq O(d) \cdot \LOPT$
\end{lemma}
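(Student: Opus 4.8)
The plan is to establish the two inequalities in Lemma~\ref{rounding} separately, since the left inequality is immediate and the right inequality is the content of the lemma. For the left inequality $\LOPT \le \OPT$, observe that any feasible solution of the ILP~(\ref{ilp}) is also feasible for the LP-relaxation~(\ref{lp}) with the same objective value, because the only change is relaxing $x_{v,S}\in\zo$ to $x_{v,S}\ge 0$; hence the LP optimum can only be smaller. This needs essentially no work.

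For the right inequality $\OPT \le O(d)\cdot\LOPT$, I would invoke the standard randomized-rounding argument for \textbf{SET-COVER} (as in~\cite{vaz-01}), being careful that the objective here is the weighted one $\sum_{v,S}|S|\cdot x_{v,S}$ rather than the number of chosen sets. First I would take an optimal fractional solution $(x_{v,S})$ to~(\ref{lp}). Then, in each of $T = \Theta(d)$ independent rounds, for every vertex $v$ include each set $S$ with probability $x_{v,S}$ (since $\sum_S x_{v,S}$ need not be $1$, this is a collection of independent coin flips, one per pair $(v,S)$, not a single sample). Let the resulting integral labeling be the union over all rounds of the chosen $(v,S)$ pairs, i.e. set $\hat x_{v,S}=1$ if $S$ was picked for $v$ in some round. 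The expected weighted cost of a single round is $\sum_{v,S}|S|\cdot x_{v,S} = \LOPT$, so by linearity and a union bound over rounds the expected total cost is at most $T\cdot\LOPT = O(d)\cdot\LOPT$. For feasibility: fix an unordered pair $\set{i,j}$; the constraint in~(\ref{lp}) says $\sum_{S\supseteq\set{i,j},\,v\in H_{ij}} x_{v,S}\ge 1$, so in one round the probability that \emph{no} covering pair $(v,S)$ is picked is $\prod (1-x_{v,S}) \le e^{-\sum x_{v,S}} \le e^{-1}$, and over $T$ independent rounds the pair stays uncovered with probability at most $e^{-T}$. There are fewer than $2^{2d} = 4^d$ pairs, so choosing $T = c d$ for a suitable constant $c$ (e.g. $T = \lceil 3d\rceil$) makes the probability that some pair is uncovered at most $4^d e^{-T} < 1$; in fact a small multiple of $d$ drives it below any desired constant. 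Finally, by a union bound / Markov argument, with positive probability the rounded solution is simultaneously feasible \emph{and} has weighted cost at most $O(d)\cdot\LOPT$ (for instance, cost at most $2T\cdot\LOPT$ by Markov with probability $>1/2$, and infeasible with probability $<1/2$, so both good events occur together); such an outcome exists, proving $\OPT \le O(d)\cdot\LOPT$.

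I should also note the sanity check that the rounded pairs $(v,S)$ with $\hat x_{v,S}=1$ genuinely describe a hub labeling: putting $v$ into $L(u)$ for every $u\in S$ over all chosen $(v,S)$ yields labels whose cover property for $\set{i,j}$ is exactly the satisfied constraint, and the size $\sum_u|L(u)|$ is at most $\sum_{v,S}|S|\hat x_{v,S}$, matching the objective. One subtlety worth a sentence is that different chosen pairs $(v,S)$ and $(v,S')$ for the same $v$ just mean $v$ is added to $L(u)$ for $u\in S\cup S'$, so the true labeling size is bounded by the ILP objective, which is all we need for an upper bound on $\OPT$.

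The main obstacle is not any single step but getting the bookkeeping of the two-sided event right: one must balance the round count $T$ so that the failure probability for feasibility ($\le 4^d e^{-T}$) is small while the cost blowup ($T\cdot\LOPT$, before Markov) stays $O(d)\cdot\LOPT$. Since $4^d e^{-T}<1$ already at $T = \lceil (2\ln 2 + \epsilon) d\rceil = O(d)$, there is ample room, and the only care needed is to state the union bound over the (at most) $4^d$ constraints and the Markov bound on cost so that their failure probabilities sum to less than $1$. A secondary point to handle cleanly is that $\sum_S x_{v,S}$ is unconstrained, so the per-vertex rounding is a product of independent Bernoulli trials rather than a categorical draw; this only helps (the inequality $\prod(1-x_{v,S})\le e^{-\sum x_{v,S}}$ still holds), but it should be stated explicitly to avoid confusion with the textbook version where each element is covered by choosing one set per "slot."
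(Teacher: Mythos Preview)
Your proof is correct and follows essentially the same approach as the paper: both inequalities rest on the observation that~(\ref{ilp}) is a \textbf{SET-COVER} instance with $O(2^{2d})$ elements, so the LP integrality gap is $O(\log 2^{2d}) = O(d)$. The only difference is presentational---the paper simply cites the logarithmic integrality-gap theorem for \textbf{SET-COVER} (Theorem~13.3 in~\cite{vaz-01}), whereas you spell out the standard randomized-rounding proof of that theorem inline.
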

\begin{proof}
  The first inequality follows from the fact that~(\ref{lp}) is a relaxation of~(\ref{ilp}). 
  
  As~(\ref{ilp}) corresponds to the standard ILP-formulation of \textbf{SET-COVER},
  and~(\ref{lp}) is the standard LP-relaxation for it, we can use
  the well-known (e.g., \cite{vaz-01}, Theorem 13.3) result:
  The integrality gap of LP-relaxation
  for \textbf{SET-COVER} is logarithmic in the number of elements we want to cover, which in our case is $O(n^2) = O(2^{2d})$.
  This implies the second inequality.
\end{proof}

Now consider the dual program to~(\ref{lp}).
\begin{equation*}
\;\;\;\; \max\sum_{\{i, j\}} y_{\{i, j\}} \;\;\;\;\;\textrm{ subject to}
\end{equation*}
\vspace{-20pt}
\begin{equation}
\label{dual}
\begin{cases}
 y_{\{i, j\}} \geq 0 & \forall \set{i, j} \subseteq \zo^d \\
 \sum_{\begin{smallmatrix}\set{i, j} \subseteq S \\ H_{ij} \ni v \end{smallmatrix}}
  y_{\{i, j\}} \leq |S| & \forall v \in \zo^d, S \subseteq \zo^d
\end{cases}
\end{equation}
The dual problem is a path packing problem.
The strong duality theorem implies that $\LOPT$ is  also the optimal solution 
value for~(\ref{dual}).

    We strengthen~(\ref{dual}) by requiring that the values $y_{\{i,j\}}$ depend only on the distance between $i$ and $j$.
    Thus, we have variables $\tilde{y}_0, \tilde{y}_1, \ldots, \tilde{y}_d$.
    Let $N_k$ denote the number of vertex pairs at distance $k$ from each other.
    Note that since $\tilde{y}$'s depend only on the distance and the hypercube is symmetric, it is enough to add
    constraints only for one vertex (e.g., $0^d$); other constraints are redundant. We have the following linear program,
    which we call \emph{regular}, and denote its optimal value by $\ROPT$.
\begin{equation*}
\;\;\;\; \max\sum_{k} N_k \cdot \tilde{y}_{k} \;\;\;\;\;\textrm{ subject to}
\end{equation*}
\vspace{-20pt}
\begin{equation}
  \label{regular}
  \begin{cases}
    \tilde{y}_{k} \geq 0 & \forall \; 0 \leq k \leq d \\
    \sum_{\begin{smallmatrix}\set{i, j} \subseteq S \\ H_{i j} \ni 0^d \end{smallmatrix}}
    \tilde{y}_{\mathrm{dist}(i, j)} \leq |S| & \forall S \subseteq \zo^d
  \end{cases}
\end{equation}
Clearly $\ROPT \leq \LOPT$. The following lemma shows 
that in fact the two values are the same.

\begin{lemma} $\ROPT \geq \LOPT$
  \label{projection}
\end{lemma}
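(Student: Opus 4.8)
The plan is to symmetrize an optimal solution of the dual program~(\ref{dual}) using the automorphism group of the hypercube. Since $\ROPT \le \LOPT$ has already been observed, it suffices to construct a feasible solution of the regular program~(\ref{regular}) whose objective value equals $\LOPT$.

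First I would recall that the automorphism group $\Gamma$ of the $d$-dimensional hypercube consists of the maps $x \mapsto \sigma(x) \oplus a$, where $a \in \zo^d$ and $\sigma$ permutes the $d$ coordinates; in particular $|\Gamma| = 2^d \cdot d!$. Every $\pi \in \Gamma$ is an isometry, so $\pi(H_{ij}) = H_{\pi(i)\pi(j)}$, while $\set{i,j} \subseteq S$ iff $\set{\pi(i),\pi(j)} \subseteq \pi(S)$, and $|S| = |\pi(S)|$. Consequently, if $y = (y_{\set{i,j}})$ is feasible for~(\ref{dual}), then so is its translate $\pi \cdot y$ defined by $(\pi \cdot y)_{\set{i,j}} = y_{\set{\pi^{-1}(i),\pi^{-1}(j)}}$: the constraint for $\pi \cdot y$ indexed by $(v,S)$ is literally the constraint for $y$ indexed by $(\pi^{-1}(v),\pi^{-1}(S))$. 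Moreover the objective is invariant, $\sum_{\set{i,j}} (\pi\cdot y)_{\set{i,j}} = \sum_{\set{i,j}} y_{\set{i,j}}$, by reindexing the sum over unordered pairs.

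Next, fix an optimal solution $y^*$ of~(\ref{dual}) (it exists: the program is feasible, e.g.\ $y \equiv 0$, and bounded with value $\LOPT$ by strong duality with~(\ref{lp})), and set $\bar y = \frac{1}{|\Gamma|}\sum_{\pi\in\Gamma}\pi\cdot y^*$. The feasible region of~(\ref{dual}) is convex and, by the previous paragraph, mapped into itself by each $\pi \in \Gamma$; hence $\bar y$ is feasible, and by linearity its objective value is still $\LOPT$. Finally, $\bar y_{\set{i,j}}$ depends only on $\dist(i,j)$: given pairs $\set{i,j}$ and $\set{i',j'}$ with $\dist(i,j)=\dist(i',j')$, there is $\pi \in \Gamma$ with $\set{\pi(i),\pi(j)} = \set{i',j'}$ (translate $i$ to $i'$, then apply a coordinate permutation carrying the coordinates where $i\oplus j$ is nonzero onto those where $i'\oplus j'$ is nonzero), so averaging over $\Gamma$ makes $\bar y$ constant on each such orbit. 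Writing $\tilde y_k$ for the common value of $\bar y$ on pairs at distance $k$, the vector $\tilde y$ satisfies every constraint of~(\ref{dual}), in particular all those with $v = 0^d$, hence is feasible for~(\ref{regular}); and $\sum_{\set{i,j}}\bar y_{\set{i,j}} = \sum_k N_k \tilde y_k = \LOPT$. Therefore $\ROPT \ge \LOPT$, and with $\ROPT \le \LOPT$ we get $\ROPT = \LOPT$.

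The only substantive ingredients are the two structural facts about $\Gamma$: that it preserves the feasible region of~(\ref{dual}) (which reduces to each $\pi$ being an isometry with $|S| = |\pi(S)|$) and that it acts transitively on pairs of vertices at each fixed distance. Both are routine for the hypercube, so I do not expect a genuine obstacle; the proof is a standard averaging argument over the symmetry group of the instance, and the main thing to get right is the bookkeeping that relabeling the constraint index $(v,S)$ matches the relabeling of the variable index $\set{i,j}$.
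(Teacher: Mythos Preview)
Your proof is correct and is essentially the same argument as the paper's: the paper averages an optimal dual solution over a uniformly random automorphism $\varphi$ (a random XOR-shift composed with a random coordinate permutation) and checks feasibility via an expectation calculation, which is exactly your group-average over $\Gamma = \zo^d \rtimes S_d$ phrased probabilistically. The only cosmetic difference is that the paper writes $\tilde y_k$ directly as the mean of $y_{\{i,j\}}$ over pairs at distance $k$ and then verifies each constraint of~(\ref{regular}) by computing $\mathbf{E}_\varphi[\cdot]$, whereas you first argue feasibility of $\bar y$ in~(\ref{dual}) via convexity and $\Gamma$-invariance and then restrict to $v=0^d$.
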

\begin{proof}
  Intuitively, the proof shows that by averaging a solution for~(\ref{dual}), we obtain a feasible solution
  for~(\ref{regular}) with the same objective function value.
  
  Given a feasible solution $y_{\{i,j\}}$ for~(\ref{dual}), define
  $$
  \tilde{y}_k = \frac{\sum_{\set{i, j} : \mathrm{dist}(i, j) = k} y_{\{i,j\}}}{N_k}.
  $$
  From the definition,
  $$
  \sum_{\{i, j\}} y_{\{i, j\}} = \sum_k N_k \cdot \tilde{y}_k.
  $$
  We need to show that $\tilde{y}_k$ is a feasible solution for~(\ref{regular}).
  
  Consider a random mapping $\varphi \colon \zo^d \to \zo^d$ that is a composition of a mapping $i \mapsto i \oplus p$,
  where $p \in \zo^d$ is a uniformly random vertex, and a uniformly random permutation of coordinates.
  Then, clearly, we have the following properties:
  \begin{itemize}
  \item $\varphi$ preserves distance;
  \item  $\varphi$ is a bijection;
  \item if the distance between $i$ and $j$ is $k$, then the pair $(\varphi(i), \varphi(j))$ is
    uniformly distributed among all pairs of vertices at distance $k$ from each other.
  \end{itemize}
  
  Let $S \subseteq \zo^d$ be a fixed subset of vertices.
  As $y_{\{i,j\}}$ is a feasible solution of~(\ref{dual}), we have
  $$
  \sum_{\begin{smallmatrix}\set{i, j} \subseteq S\\H_{i j} \ni 0^d\end{smallmatrix}} y_{\{i,j\}} \leq |S|.
  $$
  We define a random variable $X$ as follows:
  $$
  X = \sum_{\begin{smallmatrix}\set{i, j} \subseteq \varphi(S)\\H_{i j} \ni \varphi(0^d)\end{smallmatrix}} y_{\{i, j\}}.
  $$
  Since $\varphi$ is a bijection and $y$ is a feasible solution of~(\ref{dual}), we have
  $\mathbf{E}_{\varphi}[X] \leq |S|$.
  Furthermore, $\mathbf{E}_{\varphi}[X]$ is equal to
  $$
  \mathbf{E}_{\varphi}[
    \sum_{\begin{smallmatrix}\set{i, j} \subseteq \varphi(S)\\H_{i j} \ni \varphi(0^d)\end{smallmatrix}}
    y_{\{i, j\}}] = \mathbf{E}_{\varphi}[
    \sum_{\begin{smallmatrix}\set{i, j} \subseteq S\\H_{ij} \ni 0^d\end{smallmatrix}}
    y_{\{\varphi(i), \varphi(j)\}}
    ] = 
  \sum_{\begin{smallmatrix}\set{i, j} \subseteq S\\H_{ij} \ni 0^d\end{smallmatrix}}
  \mathbf{E}_{\varphi}\left[y_{\{\varphi(i), \varphi(j)\}}\right].
  $$
  Since $(\varphi(i), \varphi(j))$ is uniformly distributed among all pairs of vertices at distance
  $\mathrm{dist}(i, j)$,
  the last expression is equal to
  $\sum_{\begin{smallmatrix}\set{i, j} \subseteq S \\ H_{i j} \ni 0^d \end{smallmatrix}}
  \tilde{y}_{\mathrm{dist}(i, j)}$.
\end{proof}
Combining Lemmas~\ref{rounding} and~\ref{projection}, we get
$$
\ROPT \leq \OPT \leq O(d) \cdot \ROPT.
$$
It remains to prove that $\ROPT = (2.5 + o(1))^d$.
For $0 \leq k \leq d$, let $\tilde{y}_k^*$ denote the maximum feasible value of $\tilde{y}_k$. It is easy to see that
$\max_k N_k\tilde{y}_k^* \leq \ROPT \leq (d + 1) \cdot \max_k N_k\tilde{y}_k^*$.
Next we show that $\max_k N_k\tilde{y}_k^* = (2.5 + o(1))^d$.

To better understand (\ref{regular}), consider the
graphs $G_k$ for $0 \le k \le d$.
Vertices of $G_k$ are the same as those of the hypercube,
interpreted as subsets of $\{1, \ldots, d\}$.
Two vertices are connected by an edge in $G_k$
iff there is a shortest path of length $k$
between them that passes through $0^d$ in the hypercube.
This holds iff the corresponding subsets are disjoint
and the cardinality of the union of the subsets is equal to $k$.

Consider connected components of $G_k$.
By $C_k^i$ ($0 \leq i \leq \lfloor k / 2\rfloor$) we denote the component
that contains sets of cardinality $i$ (and $k - i$).

If $k$ is odd or $i \not = k/2$, $C^i_k$ is a bipartite graph,
with the right side vertices corresponding to sets of cardinality $i$,
and the left side vertices -- to sets of cardinality $k-i$.
The number of these vertices is $\binom{d}{i}$ and $\binom{d}{k-i}$, 
respectively.
$C^i_k$ is a regular bipartite graph with vertex degree on the right side
equal to $\binom{d-i}{k-i}$: given a subset of $i$ vertices, this is the
number of ways to choose a disjoint subset of size $k-i$.
The density of $C^i_k$ is
$$
\frac{\binom{d}{i}\cdot \binom{d-i}{k-i}}{\binom{d}{i} + \binom{d}{k-i}} .
$$

If $k$ is even and $i = k/2$, then $G^i_k$ is a graph with
$\binom{d}{i}$ vertices corresponding to the subsets of size $i$.
The graph is regular, with the degree $\binom{d-i}{k-i}$.
The density of $C^i_k$ written to be consistent with the previous case is again
$$
\frac{\binom{d}{i}\cdot\binom{d-i}{k-i}}{\binom{d}{i} + \binom{d}{k-i}} .
$$

Next we prove a lemma about regular graphs, which may be of 
independent interest.

\begin{lemma}\label{lm:reg}
In a regular graph, density of any subgraph does not exceed the density of
the graph.
In a regular bipartite graph (i.e., degrees of each part are uniform),
the density of any subgraph does not exceed the density of
the graph.
\end{lemma}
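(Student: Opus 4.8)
The plan is to read the density of a graph $G=(V,E)$ as $|E|/|V|$, matching the formulas quoted above for $C_k^i$, and to prove both statements by a direct degree‑counting argument. First I would handle the (non‑bipartite) regular case: if $G$ is $r$‑regular then $|E(G)|=r|V(G)|/2$, so its density is $r/2$. For an arbitrary subgraph $H$ with vertex set $S$, every $v\in S$ satisfies $\deg_H(v)\le\deg_G(v)=r$, hence $2|E(H)|=\sum_{v\in S}\deg_H(v)\le r|S|$, and the density of $H$ is at most $r/2$. (Isolated vertices in $H$, or missing non‑induced edges, only decrease the density, so nothing special is needed for non‑induced subgraphs.)

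For the regular bipartite case I would set up notation: let $A,B$ be the two sides, with every vertex of $A$ of degree $r_A$ and every vertex of $B$ of degree $r_B$, so that $|E(G)|=r_A|A|=r_B|B|=:M$ and the density of $G$ is $M/(|A|+|B|)$. A subgraph $H$ has vertex set $S=S_A\sqcup S_B$ with $S_A\subseteq A$, $S_B\subseteq B$. Counting the edges of $H$ from the $A$‑side and from the $B$‑side gives $|E(H)|\le r_A|S_A|=M|S_A|/|A|$ and $|E(H)|\le r_B|S_B|=M|S_B|/|B|$, hence $|E(H)|\le M\min(|S_A|/|A|,\,|S_B|/|B|)$. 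Assuming without loss of generality that $|S_A|/|A|\le|S_B|/|B|$, it then remains to verify the elementary inequality $|S_A|/|A|\le(|S_A|+|S_B|)/(|A|+|B|)$, which after clearing denominators is exactly $|S_A|\,|B|\le|A|\,|S_B|$, i.e.\ the assumed inequality. Therefore the density of $H$ is at most $M/(|A|+|B|)$, the density of $G$.

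I do not expect a genuine obstacle here; both claims are bookkeeping on edge endpoints. The only point requiring a little care is the bipartite case, where the two one‑sided edge bounds must be combined through a $\min$ and the resulting elementary inequality checked. I would also remark, for use in the surrounding argument, that the quoted density $\binom{d}{i}\binom{d-i}{k-i}/(\binom{d}{i}+\binom{d}{k-i})$ of $C_k^i$ is precisely $|E|/|V|$ in both the genuinely bipartite case ($i\neq k/2$) and the degenerate regular case ($i=k/2$), so the lemma applies verbatim to each connected component of $G_k$, and hence the maximum subgraph density of $G_k$ equals $\max_i$ of these component densities.
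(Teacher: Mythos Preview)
Your proof is correct and follows essentially the same approach as the paper: the regular case via the average-degree bound, and the biregular bipartite case by bounding the subgraph's edge count from the side with the smaller relative proportion of vertices retained, then checking the resulting elementary inequality. The paper's write-up is terser (it manipulates the fraction $xX'/(X'+Y')$ directly rather than introducing $M$), but the underlying argument is identical.
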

\begin{proof}
Let $x$ be the degree of a regular graph.
The density is a half of the average degree, and the average degree
of any subgraph is at most $x$, so the lemma follows.

Now consider a bipartite graph with $X$ vertices on the left side and 
$Y$ vertices of the right side.
Consider a subgraph with $X'$ vertices on the left and $Y'$ vertices on the
right.
Assume $X/X' \ge Y/Y'$; the other case is symmetric.

Let $x$ be the degree of the vertices on the left size, then the graph density
is $x\cdot X/(X + Y)$.
For the subgraph, the number of edges adjacent to $X'$ is at most $x\cdot X'$,
so the subgraph density is at most
$$
\frac{x\cdot X'}{X' + Y'} =
\frac{x\cdot X}{X + Y'X/X'} \le
\frac{x\cdot X}{X + Y'Y/Y'} =
\frac{x\cdot X}{X + Y} .
$$
\end{proof}

By the lemma, each $C^i_k$ is the densest subgraph of itself,
and since $C^i_k$ are connected components of $G_k$, the
densest $C^i_k$ is the densest subgraph of $G_k$.

Next we prove a lemma that gives (the inverse of) the value of maximum
density of a subgraph of $G_k$.
\begin{lemma}
  \label{lm:middle}
  For fixed $d$ and $k$ with $k \le d$, the minimum of the expression
  $$
  \frac{\binom{d}{x} + \binom{d}{k-x}}
  {\binom{d}{x}\cdot\binom{d-x}{k-x}}
  $$
  is achieved for $x = \lfloor k/2 \rfloor$ and $x = \lceil k/2 \rceil$
  (with the two values being equal).
\end{lemma}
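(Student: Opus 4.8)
The plan is to rewrite the quantity so that a symmetry about $x=k/2$ becomes visible, and then reduce the claim to the convexity of a single sequence. The starting point is the identity
$$
\binom{d}{x}\binom{d-x}{k-x}=\binom{d}{k}\binom{k}{x},
$$
valid for $0\le x\le k\le d$ since both sides equal $d!/(x!\,(k-x)!\,(d-k)!)$. Using it together with $\binom{k}{x}=\binom{k}{k-x}$, the expression in the lemma equals
$$
\frac{\binom{d}{x}+\binom{d}{k-x}}{\binom{d}{x}\binom{d-x}{k-x}}
=\frac{1}{\binom{d}{k}}\left(\frac{\binom{d}{x}}{\binom{k}{x}}+\frac{\binom{d}{k-x}}{\binom{k}{k-x}}\right)
=\frac{\phi(x)+\phi(k-x)}{\binom{d}{k}},
$$
where $\phi(j):=\binom{d}{j}/\binom{k}{j}$ for $0\le j\le k$. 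So, up to the positive factor $1/\binom{d}{k}$, we must minimize $G(x):=\phi(x)+\phi(k-x)$ over the integers $0\le x\le k$ (the values at $x$ and $k-x$ coincide). Since $G(x)=G(k-x)$, the equality $G(\lfloor k/2\rfloor)=G(\lceil k/2\rceil)$ is automatic, and it remains to show this common value is the minimum.

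The heart of the argument is that $\phi$ is discretely convex on $\{0,\dots,k\}$. Indeed, for $0\le j\le k-1$,
$$
\frac{\phi(j+1)}{\phi(j)}=\frac{\binom{d}{j+1}}{\binom{d}{j}}\cdot\frac{\binom{k}{j}}{\binom{k}{j+1}}=\frac{d-j}{j+1}\cdot\frac{j+1}{k-j}=\frac{d-j}{k-j}=1+\frac{d-k}{k-j}.
$$
Because $d\ge k$ this ratio is at least $1$, and because $k-j$ decreases as $j$ grows it is non-decreasing in $j$. Hence $\phi(j+1)^2\le\phi(j)\phi(j+2)$, so by AM--GM $\phi(j)+\phi(j+2)\ge 2\phi(j+1)$; equivalently the first difference $\Delta\phi(j):=\phi(j+1)-\phi(j)$ is non-decreasing on $\{0,\dots,k-1\}$.

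Given this, I would finish by a standard computation: for $0\le x\le k-1$,
$$
G(x+1)-G(x)=\bigl(\phi(x+1)-\phi(x)\bigr)-\bigl(\phi(k-x)-\phi(k-1-x)\bigr)=\Delta\phi(x)-\Delta\phi(k-1-x).
$$
Since $\Delta\phi$ is non-decreasing, $G(x+1)\le G(x)$ whenever $x\le k-1-x$ (i.e.\ $x\le\lceil k/2\rceil-1$) and $G(x+1)\ge G(x)$ whenever $x\ge k-1-x$ (i.e.\ $x\ge\lfloor k/2\rfloor$). Thus $G$ is non-increasing on $\{0,\dots,\lceil k/2\rceil\}$ and non-decreasing on $\{\lfloor k/2\rfloor,\dots,k\}$, so its minimum is attained at $x=\lfloor k/2\rfloor$ and $x=\lceil k/2\rceil$; dividing by $\binom{d}{k}$ gives the lemma.

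The only genuine calculation is the ratio $\phi(j+1)/\phi(j)=(d-j)/(k-j)$ and its monotonicity — that is exactly where $k\le d$ enters, and the step I would be most careful to get right; the rest is bookkeeping (and the parity cases for $\lfloor k/2\rfloor$ versus $\lceil k/2\rceil$ dissolve because of the $x\leftrightarrow k-x$ symmetry). The one degenerate situation is $d=k$, where $\phi\equiv 1$ and the expression is the constant $2$: the inequalities above still hold weakly, so the stated minimizers remain valid (merely not unique). For $k<d$ the ratio $(d-j)/(k-j)$ is strictly increasing, so $\phi$ is strictly convex and $\lfloor k/2\rfloor,\lceil k/2\rceil$ are the only minimizers.
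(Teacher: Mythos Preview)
Your proof is correct. Both you and the paper begin with the same symmetric decomposition: the paper rewrites the expression as $1/\binom{d-x}{k-x}+1/\binom{d-k+x}{x}$, which is exactly your $(\phi(x)+\phi(k-x))/\binom{d}{k}$ after a short calculation. The difference is in how the monotonicity away from $x=\lceil k/2\rceil$ is established. The paper compares the multiplicative factors by which the two terms change when $x\mapsto x+1$, namely $(d-x)/(k-x)$ and $(d-k+x+1)/(x+1)$, shows the first dominates the second, and then invokes an ad hoc inequality (if $0\le s\le t$ and $\alpha\ge\beta\ge 1$ then $\alpha t+s/\beta\ge t+s$) to conclude the sum increases. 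You instead observe that the ratio $\phi(j+1)/\phi(j)=(d-j)/(k-j)$ is non-decreasing, deduce log-convexity and hence discrete convexity of $\phi$, and finish with the standard fact that a convex sequence plus its reflection is unimodal about the midpoint. Your route is a bit more conceptual and avoids the auxiliary lemma; the paper's route is slightly more elementary (no AM--GM) but requires the extra bespoke inequality. Either way the computation of $(d-j)/(k-j)$ and its monotonicity is the crux, and you have that right.
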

\begin{proof}
Using the standard identity
$$
\binom{d}{x}\cdot\binom{d-x}{k-x} = \binom{d}{k - x}\cdot\binom{d-k+x}{x}
$$
we write the expression in the lemma as
$$
\frac{1}{\binom{d-x}{k-x}} + \frac{1}{\binom{d-k+x}{x}} = 
\frac{(d-k)!(k-x)!}{(d-x)!} + \frac{(d-k)! x!}{(d-k+x)!} .
$$
Since $d-k$ is a constant, we need to minimize
\begin{equation}\label{eq:to_min}
\frac{1}{(k-x+1)\cdot \ldots \cdot (d-x)} + 
\frac{1}{(x+1)\cdot \ldots \cdot (d-k+x)} .
\end{equation}
Note that the expression is symmetric around 
$x = k/2$: for $y = k - x$, the expression
becomes
$$
\frac{1}{(y+1)\cdot \ldots \cdot (d-k+y)} +
\frac{1}{(k-y+1)\cdot \ldots \cdot (d-y)} .
$$
So it is enough to show that for $x \ge \lceil k/2 \rceil$, the
minimum is achieved at $x = \lceil k/2 \rceil$.

We will need the following auxiliary lemma.

\begin{lemma}
    \label{manip}
    If $0 \leq s \leq t$ and $\alpha \geq \beta \geq 1$, then $\alpha t + s / \beta \geq t + s$.
\end{lemma}
\begin{proof}
    Since $2 \leq \alpha + 1 / \alpha \leq \alpha + 1 / \beta$, we have $\alpha - 1 \geq 1 - 1 / \beta$.
    Thus, $(\alpha - 1) t \geq s(1 - 1 / \beta)$, and the lemma follows.
\end{proof}

It is clear that for every $x$ the first term of (\ref{eq:to_min}) is not less than the second one.
If we move from $x$ to $x + 1$, then the first term is multiplied by $(d - x)/(k - x)$, and the second
term is divided by $(d - k + x + 1) / (x + 1)$. Since
$$
    \frac{d - x}{k - x} - \frac{d - k + x + 1}{x + 1} = \frac{(d - k) (2x + 1 - k)}{(x + 1)(k - x)} \geq 0,
$$
we can invoke Lemma~\ref{manip} with $t$ and $s$ being equal to the first and the second term of (\ref{eq:to_min}),
respectively, $\alpha = (d - x)/(k - x)$, $\beta = (d - k + x + 1) / (x + 1)$.
\end{proof}

Recall that $\tilde{y}_k^*$ denotes the maximum feasible value of $\tilde{y}_k$.

\begin{lemma}\label{lm:binom}
  $$
  \tilde{y}_k^* = \begin{cases}
    1 & k = 0 \\
    2 / \binom{d - i}{i} & k = 2i, i > 0 \\
    \left(\binom{d}{i} + \binom{d}{i + 1}\right) / \left(\binom{d}{i} \cdot \binom{d - i}{i + 1}\right) & k = 2i + 1.
  \end{cases}
  $$
\end{lemma}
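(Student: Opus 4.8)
The plan is to reduce the computation of each $\tilde{y}_k^*$ to the densest‑subgraph question about $G_k$ that Lemmas~\ref{lm:reg} and~\ref{lm:middle} have already settled. The first step is to pass to a single variable: every constraint of~(\ref{regular}) has the form $\sum_{k'}c_{k'}\tilde{y}_{k'}\le|S|$ with all $c_{k'}\ge 0$, and every $\tilde{y}_{k'}\ge 0$, so if $(\tilde{y}_0,\dots,\tilde{y}_d)$ is feasible then so is the point obtained by zeroing out every coordinate except the $k$‑th, since this only decreases each left‑hand side. Hence $\tilde{y}_k^*$ is the largest $v$ for which $(0,\dots,0,v,0,\dots,0)$ is feasible, which unwinds to
$$
\tilde{y}_k^*\;=\;\min_{S\,:\;m_k(S)>0}\ \frac{|S|}{m_k(S)},
$$
where $m_k(S)$ is the number of edges of $G_k$ with both endpoints in $S$ and the constraints with $m_k(S)=0$ are vacuous. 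This already disposes of $k=0$: the only pair $\{i,j\}$ with $\dist(i,j)=0$ whose induced subcube contains $0^d$ is $i=j=0^d$, so $m_0(S)=1$ if $0^d\in S$ and $0$ otherwise, and the tightest constraint is $S=\{0^d\}$, giving $\tilde{y}_0^*=1$.

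For $k\ge 1$, the quantity $\max_{S}m_k(S)/|S|$ is exactly the maximum density over subgraphs of $G_k$ (the maximum being attained on an induced subgraph, so nothing is lost by counting only induced edges), and $\tilde{y}_k^*$ is its reciprocal. By Lemma~\ref{lm:reg} each connected component $C_k^i$, being regular or regular bipartite, is the densest subgraph of itself; and since the $C_k^i$ are precisely the connected components of $G_k$, the densest subgraph of $G_k$ is the densest among the $C_k^i$. Substituting the density $\binom{d}{i}\binom{d-i}{k-i}\big/\big(\binom{d}{i}+\binom{d}{k-i}\big)$ recorded earlier yields
$$
\tilde{y}_k^*\;=\;\min_{0\le i\le\lfloor k/2\rfloor}\ \frac{\binom{d}{i}+\binom{d}{k-i}}{\binom{d}{i}\cdot\binom{d-i}{k-i}}.
$$

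Finally, Lemma~\ref{lm:middle} identifies the minimizer as $i=\lfloor k/2\rfloor$ (equivalently $\lceil k/2\rceil$), so it only remains to substitute. For $k=2i$ with $i>0$ the expression collapses to $2\binom{d}{i}\big/\big(\binom{d}{i}\binom{d-i}{i}\big)=2/\binom{d-i}{i}$, and for $k=2i+1$, using $k-i=i+1$, it becomes $\big(\binom{d}{i}+\binom{d}{i+1}\big)\big/\big(\binom{d}{i}\binom{d-i}{i+1}\big)$. These, together with the $k=0$ case, exhaust $0\le k\le d$ and match the claimed formula. One point deserves care: the $k=0$ case genuinely has to be peeled off, since the bipartite density formula would spuriously give $2$ there by double‑counting the lone self‑loop. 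Otherwise the substantive work — bounding densities of subgraphs of regular graphs, and locating the minimizing component — is exactly what was done in Lemmas~\ref{lm:reg} and~\ref{lm:middle}; what remains is the single‑variable reduction, the identification of $m_k(S)$ with an induced edge count, and the degenerate bookkeeping at $k=0$, which I expect to be the only mildly delicate spot.
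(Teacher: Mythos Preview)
Your proof is correct and follows the same line as the paper's: relate $\tilde{y}_k^*$ to the inverse maximum subgraph density of $G_k$, invoke Lemma~\ref{lm:reg} to reduce to the components $C_k^i$, and then apply Lemma~\ref{lm:middle} to locate the densest component and substitute. You are simply more explicit than the paper about the single-variable reduction, the identification of the constraint left-hand side with an induced edge count in $G_k$, and the degenerate $k=0$ case, all of which the paper leaves implicit.
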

\begin{proof}
Fix $k$ and consider the maximum density subgraph of $G_k$.
Inverse of the subgraph density is an upper bound on a feasible value
of $\tilde{y}_k$.

On the other hand, it is clear that we can set $\tilde{y}_k$ to the inverse
density of the densest subgraph of $G_k$ and other $\tilde{y}$'s to zero, and
obtain the feasible solution of (\ref{regular}).

By applying Lemma~\ref{lm:middle}, we obtain the desired statement.
\end{proof}

Recall that $N_k$ denotes the number of vertex pairs at distance $k$ 
from each other.
For each vertex $v$, we can choose a subset of $k$ bit positions and flip bits
in these positions, obtaining a vertex at distance $k$ from $v$.
This counts the ordered pairs, we need to divide by two to get the the number of unordered pairs:
$$
N_k = 2^d \binom{d}{k} /2,
$$
except for the case $k = 0$, where $N_0 = 2^d$.

Finally, we need to find the maximum value of
$$
\psi(k) := N_k \cdot \tilde{y}_k^* = 2^{d} \cdot \begin{cases}
  \binom{d}{2i} / \binom{d - i}{i} & k = 2i \\
  \binom{d}{2i + 1} \cdot \left(\binom{d}{i} + \binom{d}{i + 1}\right) / 
  \left(2 \cdot \binom{d}{i} \cdot \binom{d - i}{i + 1}\right) & k = 2i + 1.
\end{cases}
$$

One can easily see that $\psi(2i + 1) / \psi(2i) = (d+1)/(4i+2)$.
So, if we restrict our attention to the case $k = 2i$, we could potentially lose only polynomial
factors.

We have
$$
    \frac{\psi(2i + 2)}{\psi(2i)} = \frac{d-i}{4i+2}.
$$
This expression is greater than one if $i < (d-2)/5$. 
The optimal $i$ has to be as close as possible to the bound. 
As $d \to \infty$, this is $\frac{d}{5}\cdot(1+o(1))$.

We will use the standard fact: if for $n \to \infty, m / n \to \alpha$, then
$$
    \binom{n}{m} = (2^{H(\alpha)} + o(1))^n,
$$
where $H$ is the Shannon entropy function 
$H(\alpha) = -\alpha \log_2 \alpha - (1 - \alpha) \log_2 (1-\alpha)$.

Thus, if $d \to \infty, k / d \to 2/5$, then
$$
    \psi(k) = (2^{1 + H(0.4) - 0.8 \cdot H(0.25)} + o(1))^d.
$$
One can verify that
$$
    2^{1 + H(0.4) - 0.8 \cdot H(0.25)} = 2.5,
$$
so we have the desired result.

\section{Concluding Remarks}
\label{sec:conc}

We show a polynomial gap between the sizes of \hl\ and \hhl\ 
for hypercubes.
Although our existence proof for $(2.5 + o(1))^d$-size \hl\ is non-constructive,
the approximation algorithm of~\cite{CHKZ-03} can build such labels
in polynomial time.
However, it is unclear how these labels look like.
It would be interesting to have an explicit construction of such labels.

Little is known about the problem of computing the smallest \hhl.
We do not know if the problem is NP-hard, and we know no 
polynomial-time algorithm for it (exact or polylog-approximate).
These are interesting open problems.

The \hl\ vs.\ \hhl\ separation we show does not mean that \hhl\ labels
are substantially bigger than the \hl\ ones for any graphs.
In particular, experiments suggest that \hhl\ works well for road networks.
It would be interesting to characterize the class of networks for which
\hhl\ works well.

Note that an arbitrary (non-hub) labelings for the hypercube can be small:
we can compute the distances from the standard $d$-bit vertex IDs.
It would be interesting to show the gap between \hl\ and \hhl\ for graph classes
for which arbitrary labelings must be big.

We believe that one can prove an $\Theta^*(n^{1.5})$ bound for \hl\ size on
constant degree random graphs using the primal-dual method.
However, for this graphs it is unclear how to prove tight bounds on the size of
\hhl.


\end{document}